\documentclass[lettersize,journal]{IEEEtran}

\usepackage{amsmath,amssymb,amsthm}
\usepackage{graphicx}
\usepackage{booktabs}
\usepackage{multirow}
\usepackage{algorithm}
\usepackage[noend]{algorithmic}
\usepackage{xcolor}
\usepackage[hidelinks]{hyperref}
\usepackage{cite}
\usepackage{array}
\usepackage{balance}
\usepackage{subcaption}

\newtheorem{theorem}{Theorem}
\newtheorem{definition}{Definition}
\newtheorem{proposition}{Proposition}

\newtheorem{corollary}{Corollary}

\newcommand{\store}{\mathcal{M}}
\newcommand{\query}{q}
\newcommand{\memory}{m}
\newcommand{\key}{K}

\newcommand{\ASR}{\mathrm{ASR}}

\begin{document}

\title{SMSR: Certified Defence Against Runtime Memory Poisoning\\
       in Persistent LLM Agent Systems}

\author{Tarun~Sharma%
        \thanks{T.~Sharma is an Independent Researcher
        (e-mail: tarun.sharma@ieee.org).}%
        \thanks{Code and data are available at \url{https://github.com/tarun-ks/smsr}.}%
        \thanks{This work has been submitted to the IEEE for possible publication.
        Copyright may be transferred without notice, after which this version may
        no longer be accessible.}}

\markboth{Preprint}%
{Sharma: SMSR: Certified Defence Against Runtime Memory Poisoning}

\maketitle

\begin{abstract}
Retrieval-Augmented Generation (RAG) systems underpin an increasing fraction
of enterprise AI deployments.  When an agent's memory store persists across
user sessions, an adversary who can interact with the system can inject
carefully crafted memories that, once retrieved, redirect the agent's
behaviour on future queries---without ever modifying model weights or
application code.  We call this the \emph{Multi-Session Memory Poisoning}
(MSMP) threat and observe that no existing defence provides a formal security
certificate against it.  Static-corpus defences (RobustRAG, ReliabilityRAG)
assume a fixed knowledge base; heuristic filters are bypassable by fluent
enterprise-style text.

We present \emph{Signed Memory with Smoothed Retrieval} (SMSR), a two-component
defence that is the first to provide a certified robustness bound for the MSMP
setting.  Component~1 applies HMAC-SHA256 provenance tagging at write time,
creating a cryptographically hard boundary against unsigned injection.
Component~2 applies randomised memory ablation at query time with a
verdict-based majority aggregator, bounding the influence of authenticated
adversaries.

We prove an impossibility result showing that no provenance-free retrieval-time
filter can certify against adaptive injection, and derive a hypergeometric
certificate for Component~2.  We also formalise and quantify the
\emph{Consistent Minority Effect} in the memory-poisoning setting: string-based
majority vote is gamed by adversaries who generate consistent responses, and
verdict-based aggregation removes the effect.

Empirical evaluation on 15 enterprise knowledge-base scenarios (3{,}150 repeated
trials: six Component-2 configurations $\times$ 15 scenarios $\times$ 30
repetitions, plus 450 production-scale trials) reports five findings.
\textbf{First}, Component~1 reduces ASR from 93--100\% to \textbf{0\%} for all
unsigned injection variants including bypass attacks crafted to evade heuristic
filters.
\textbf{Second}, in a production-scale store ($m=20$, 20 seed memories,
Tier~1), Component~2 reduces authenticated ASR from 93--100\% to \textbf{8.0\%}
(95\% CI [5.8\%, 10.9\%], $n=450$) for $t=1$, which lies safely below the
Theorem-2 worst-case bound $\delta=10.4\%$ at our evaluated setting
$n_\text{runs}=5$ (the bound tightens to 7.1\% at $n_\text{runs}=7$, which we
do not evaluate).
\textbf{Third}, in our small-store evaluation ($m'=10{+}t$), the canonical
direct-injection $t=1$ ASR is \textbf{37.8\%} (95\% CI [33.4\%, 42.3\%],
$n=450$), below the eval-pool bound $\delta=41.5\%$; the direct-injection
bound holds at $t\in\{1,2,3\}$, while the flooding variant sits at the
worst-case bound (its CI straddles $\delta$), as expected for a tight
certificate.
\textbf{Fourth}, judge reliability is $\kappa=0.955$ (Haiku vs Sonnet, $n=84$);
in the production-scale 20-seed store, Sonnet and Haiku both achieve 8.0\% ASR
under Component~2, confirming the defence generalises across model families
(Section~\ref{sec:e7auth}).
\textbf{Fifth}, in an end-to-end test where the agent itself writes the poison
via a query-only attack (not pre-seeded), SMSR reduces ASR from 65.3\% to 5.3\%
($n=150$, non-overlapping CIs; Section~\ref{sec:e10}).
Utility on clean queries is 90\% under Component~1 and 85\% under the combined
defence.
\end{abstract}

\begin{IEEEkeywords}
LLM agent security, retrieval-augmented generation, memory poisoning,
certified robustness, provenance, randomised smoothing, OWASP LLM08.
\end{IEEEkeywords}

\section{Introduction}
\label{sec:intro}

Retrieval-Augmented Generation (RAG) agents are deployed at scale in
enterprise environments for internal search, automated customer support,
compliance checking, and agentic process automation~\cite{owasp2025,csa2026}.
A defining characteristic of production deployments is \emph{persistent
memory}: the agent accumulates a growing store of past interactions, retrieved
documents, and inferred facts across user sessions, enabling context continuity
and personalisation.

\textbf{The attack surface.}
This persistence creates a novel attack vector: an adversary who can interact
with the system through normal channels---whether as an employee, a customer, or
a compromised upstream tool---can craft a sequence of interactions that, once
stored in the memory bank, persist indefinitely and corrupt future responses for
any user whose query semantically matches the poisoned entry.  Unlike training
data poisoning or jailbreak attacks that target model weights or prompts
directly, this \emph{runtime memory poisoning} operates at the retrieval layer
and leaves no trace in the model itself.

MINJA~\cite{minja2025} demonstrated that this attack achieves $\approx$76--99\%
success rates against memory-augmented agents.  AgentPoison~\cite{agentpoison2024}
showed that a single poisoned entry in a 23k-entry database suffices for 62\%
end-to-end attack success.  MemoryGraft~\cite{memgraft2025} demonstrated
persistent compromise across sessions at 48\% poisoned retrieval proportion.
None of the proposed countermeasures in these works provides a formal
security guarantee.

\textbf{Why existing defences are insufficient.}
The two most rigorous existing defences, RobustRAG~\cite{robustrag2024} and
ReliabilityRAG~\cite{reliabilityrag2025}, assume a \emph{static} document
corpus that the adversary poisons externally before queries are made.  They
provide no security guarantee when poisoned entries are injected at runtime
through the agent's normal memory write path.  Heuristic defences such as A-MemGuard~\cite{amemguard2025} use consensus
validation across multiple reasoning paths.  In our like-for-like evaluation
(Section~\ref{sec:eval}), A-MemGuard attains empirical ASR comparable to SMSR
at $t=1$, but it provides \emph{no formal guarantee}: SMSR's distinction is a
certified robustness bound combined with over-fetch random sampling that,
unlike full-retrieval consensus, degrades gracefully as a persistent adversary
adds entries.  As a general-purpose heuristic baseline we implement a
keyword blacklist, entropy proxy, and semantic anomaly filter (not attributed
to any specific system) and show empirically that all three are fully bypassed
(100\% ASR) by fluent enterprise-style policy text, motivating the need for
the provenance mechanism in Component~1.

\textbf{Contributions.}
This paper makes the following contributions:
\begin{enumerate}
  \item \textbf{Formal definitions}: We introduce the Multi-Session Memory
    Poisoning (MSMP) threat model and the first formal $(t,\delta)$-security
    definition for runtime agent memory systems (Section~\ref{sec:model}).

  \item \textbf{Impossibility result}: We prove that no provenance-free
    retrieval-time filter can achieve a non-trivial security certificate against
    an adaptive MSMP adversary (Theorem~\ref{thm:impossibility}).

  \item \textbf{SMSR construction}: We present the Signed Memory with
    Smoothed Retrieval defence, combining HMAC provenance (Component~1) with
    randomised ablation and verdict-based majority aggregation (Component~2),
    and derive a formal hypergeometric certificate for Component~2
    (Theorem~\ref{thm:certificate}).

  \item \textbf{Consistent Minority Effect (CME)}: We formalise and
    characterise the CME --- a known self-consistency pitfall~\cite{wang2023selfconsistency}
    --- in the memory-poisoning setting: an adversarial response wins string-based
    majority vote despite being a numerical minority, because it is textually
    more consistent than the varied benign responses.  We show that
    verdict-based aggregation removes the effect and quantify it honestly
    (Proposition~\ref{prop:cma}, Section~\ref{sec:cma}).

  \item \textbf{Empirical validation}: We evaluate SMSR on 15 enterprise
    knowledge-base scenarios across three attack classes (unsigned injection,
    authenticated injection, heuristic bypass) and four defence configurations,
    using a LLM judge for reliable verdict evaluation.
\end{enumerate}

\begin{figure*}[t]
\centering
\includegraphics[width=0.96\textwidth]{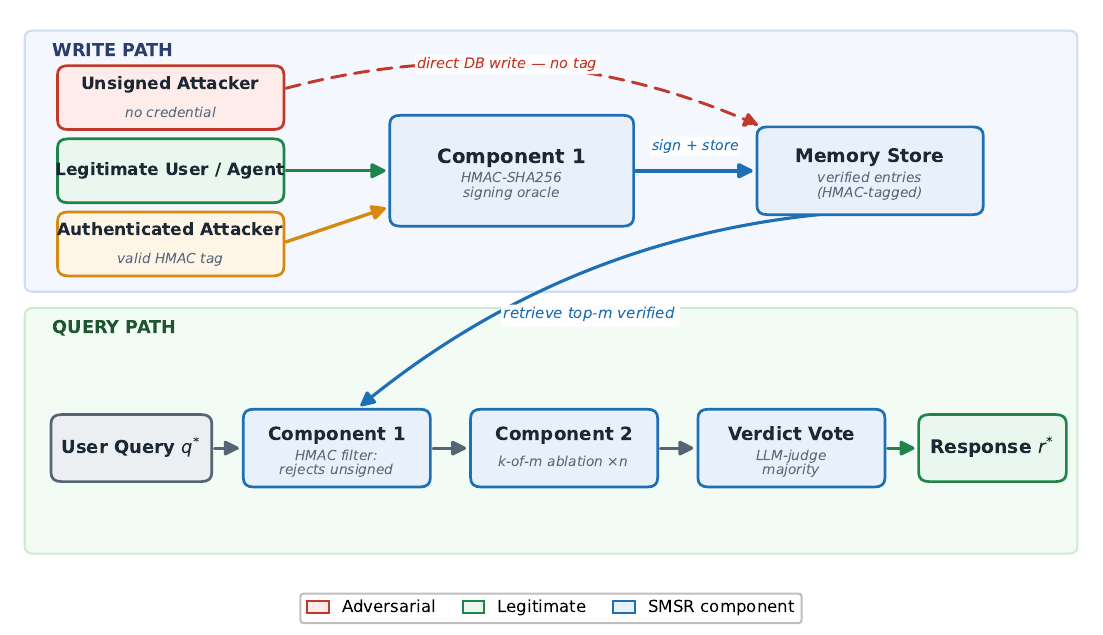}
\caption{SMSR system architecture. The write path (top) shows how the
HMAC signing oracle tags every legitimate memory and how an unsigned attacker
is blocked at retrieval time by Component~1.  An authenticated attacker
(legitimate user) can write signed memories but is mitigated by the randomised
ablation and verdict-based aggregation of Component~2.  The query path (bottom)
shows both components in sequence.}
\label{fig:arch}
\end{figure*}

\section{Background and Related Work}
\label{sec:related}

\subsection{Memory-Augmented LLM Agents}

Modern agentic systems such as LangGraph~\cite{langgraph}, AutoGen~\cite{autogen},
and MemGPT maintain persistent memory as a vector database over agent
interaction traces.  At query time the agent embeds the user's query
$\phi(\query) \in \mathbb{R}^d$ and retrieves the $k$ entries from the
memory store with highest cosine similarity.  These entries are prepended
to the LLM's context as few-shot demonstrations or factual context,
directly influencing the response.

Unlike static RAG over a curated document corpus (e.g., a company policy PDF
index), agent memory is continuously written to by the agent itself during
normal operation---every interaction typically generates one or more new
memory entries.  This creates the fundamental tension at the heart of the MSMP
threat: the write path that enables the agent's utility (accumulating useful
context over time) is also the injection path for an adversary.
Traditional access-control solutions that distinguish ``data'' from ``instructions''
do not apply, because in an agent memory system, data IS instructions---retrieved
memories function as in-context demonstrations that shape future behaviour.

Multi-tenant deployments exacerbate the problem: a shared memory store serving
multiple users means one user's interactions can affect all future users' responses.
This is the precise scenario exploited by MINJA~\cite{minja2025}, which achieves
$\approx$98\% injection success rate across GPT-4/Claude in shared agent settings.

\subsection{The OWASP LLM08 Threat}

The Open Web Application Security Project (OWASP) recognises ``Vector and
Embedding Weaknesses'' as LLM08 in its 2025 Top 10 for LLM
Applications~\cite{owasp2025}.  LLM08 covers both \emph{inversion} attacks
(recovering original text from stored embeddings) and \emph{poisoning} attacks
(injecting malicious content into the retrieval pipeline).  Our work addresses
the poisoning sub-problem, specifically the dynamic injection variant that
is absent from all prior formal treatments of LLM08 defences.

\subsection{Memory Poisoning Attacks}

MINJA~\cite{minja2025} is the first systematic study of runtime memory
injection: the adversary sends crafted queries that cause the agent to generate
and store poisoned reasoning traces, which are subsequently retrieved as
few-shot demonstrations for future users ($\approx$98\% injection success rate).
AgentPoison~\cite{agentpoison2024} targets RAG knowledge bases with a trigger
token that clusters poisoned entries in embedding space, achieving 62.6\%
end-to-end attack success.  MemoryGraft~\cite{memgraft2025} poisons an
agent's persistent experience store via a benign-looking artefact, demonstrating
that 10 poisoned seeds in a 110-entry store yield 48\% poisoned retrieval.
The Agent Security Bench~\cite{agentsecuritybench} catalogues these and related
agent attacks in a unified benchmark.

None of these attack papers proposes a certified defence; their countermeasure
discussions are heuristic.  Concurrent work by Devarangadi~Sunil et
al.~\cite{memory_poisoning_defense2025} proposes trust-scoring and
memory-sanitisation defences and independently observes that a store
pre-populated with legitimate memories dilutes attack success---the same
effect our Component-2 certificate (Theorem~\ref{thm:certificate}) quantifies
formally---but their defences remain heuristic with no certified bound.

\subsection{Defences for RAG Systems}

\textbf{Why static-corpus defences fail for MSMP.}
RobustRAG~\cite{robustrag2024} and ReliabilityRAG~\cite{reliabilityrag2025}
provide certified bounds under a model that is structurally incompatible with
the MSMP setting.  Both assume: (1) the corpus is indexed offline before
queries are answered; (2) the adversary poisons the corpus externally; and
(3) the number of poisoned documents in the retrieved set is an \emph{a priori}
known constant.  In the MSMP setting all three assumptions are violated: the
memory store is a live append-only log, the adversary operates through the
same write path as legitimate users, and the number of adversarial entries in
the retrieved set is unknown and depends on adversary persistence.

The ``isolate-then-aggregate'' strategy of RobustRAG requires partitioning
the retrieved set into $\lceil k/k' \rceil$ disjoint groups, each processed
independently.  A persistent adversary in a dynamic memory system can plant
entries in every partition, collapsing the certificate.  Our Component~2
addresses this via over-fetch randomisation ($m \gg k$) rather than deterministic
partitioning, preventing the adversary from guaranteeing presence in every
sampled context.

\textbf{Overview.}
RobustRAG~\cite{robustrag2024} applies an isolate-then-aggregate strategy
that provides a certified lower bound on response quality when at most $k'$
of the $k$ retrieved passages are adversarially injected.  ReliabilityRAG
\cite{reliabilityrag2025} builds on this with a Maximum Independent Set
construction on a document contradiction graph.  Both require a fixed,
pre-indexed corpus and explicitly do not address dynamic memory writes; their
proofs rely on a static adversary model.

For the query privacy dimension of RAG, Boldyreva and Tang~\cite{boldyreva2021}
formalise simulation-based security for private approximate $k$-NN search,
covering access, query, and volume pattern leakage.  The SoK by Bodea
et al.~\cite{sok2026} systematically reviews the RAG \emph{privacy} literature;
no certified poisoning defence for the dynamic/runtime memory setting is
known to us --- the closest works treat static-corpus robustness~\cite{robustrag2024,reliabilityrag2025}
and leave the dynamic setting open.

\subsection{Randomised Smoothing and Ablation}

Cohen et al.~\cite{cohen2019} introduced randomised smoothing to certify
$\ell_2$ robustness for image classifiers.  Levine and Feizi~\cite{levine2020}
adapted this to text by randomised ablation: a classifier trained on randomly
masked inputs inherits a certified bound based on the proportion of tokens
the adversary can control.  Our Component~2 follows the same structural argument from Levine and
Feizi~\cite{levine2020} (for image classifiers) and Zeng et al.~\cite{zeng2023certified}
(for text), adapted to the retrieval-over-memory setting where the adversary
controls a bounded number of entries in the retrieval candidate pool.

\section{Threat Model and Problem Formulation}
\label{sec:model}

\subsection{System Model}

We consider a multi-session RAG agent $\mathcal{A}$ with:
\begin{itemize}
  \item A persistent memory store $\store = \{m_1, \ldots, m_N\}$ where each
    entry $m_i$ is a text string and its embedding.  $\store$ is append-only
    during normal operation; entries are written by the agent after each
    interaction.
  \item A retrieval function $\textsc{Retrieve}(\query, \store, k)$ that
    returns the $k$ entries most similar to query $\query$ by cosine similarity.
  \item An LLM response generator $\mathcal{G}$ that takes query $\query$ and
    retrieved context $\mathcal{C}$ and produces a response $r$.
\end{itemize}

Sessions are independent in that a new user at session $s'$ has no knowledge
of prior session $s$, but both sessions query the same persistent $\store$.

\subsection{Adversary Model}

We consider two adversary classes of increasing capability:

\textbf{Unsigned adversary $\mathcal{A}_U$}: Has direct write access to
$\store$ (via database misconfiguration, SQL injection, or stolen backup
with append rights) and can insert entries without any authentication
credential.  Cannot modify existing entries or delete anything.

\textbf{Authenticated adversary $\mathcal{A}_P$}: Is a legitimate user of
the system; can write entries through the normal agent interaction path.
Can inject at most $t$ adversarially crafted entries per attack campaign.
Does not know the server-side secret key $\key$.

Both adversaries know the embedding model, the agent architecture, and the
retrieval mechanism.  Neither knows the HMAC key $\key$.

\subsection{Security Definitions}

\begin{definition}[Multi-Session Memory Poisoning (MSMP)]
\label{def:msmp}
An MSMP adversary $\mathcal{A}$ against agent $\mathcal{A}$ with memory
$\store$ is a probabilistic algorithm that:
\begin{enumerate}
  \item (Write phase) Injects at most $t$ memories $\{m^*_1,\ldots,m^*_t\}$
    into $\store$ via either the unsigned or authenticated write path.
  \item (Trigger phase) Observes that a future user issues a target query
    $\query^*$ semantically related to the injected content.
\end{enumerate}
The attack succeeds if $\mathcal{G}(\query^*, \textsc{Retrieve}(\query^*,
\store, k))$ produces a response that the adversary designates as
``malicious'' (a pre-specified false claim).
\end{definition}

\begin{definition}[$(t,\delta)$-SMSR Security]
\label{def:smsr_security}
A memory retrieval system is $(t,\delta)$-\emph{SMSR-secure} if for any MSMP
adversary $\mathcal{A}$ that injects at most $t$ adversarial entries, the
probability that the agent's response is malicious is at most $\delta$:
\[
  \Pr[\mathcal{A}(\query^*, \store) = \text{malicious}] \le \delta.
\]
\end{definition}

\section{Impossibility of Provenance-Free Certified Defence}
\label{sec:impossibility}

We first establish that any defence operating solely at retrieval time---without
write-time provenance---cannot achieve a non-trivial security certificate against
an adaptive adversary.

\begin{theorem}[Impossibility of Provenance-Free Worst-Case Certification]
\label{thm:impossibility}
Let $f: \mathcal{M} \times \mathcal{Q} \to \{0,1\}$ be any \emph{deterministic}
retrieval-time filter that decides whether to include memory $\memory$ in the
context for query $\query$ based solely on content.
Under the following standard embedding assumption:

\noindent\textbf{Assumption~A1} (\emph{fluent embedding density}):
For any target query embedding $e^*$ and threshold $\tau_\text{sim} > 0$,
there exists a fluent text string whose embedding has cosine similarity
$\ge \tau_\text{sim}$ with $e^*$.

\noindent then for any unbounded adaptive adversary $\mathcal{A}$, no $f$
achieves a non-trivial worst-case certificate: there always exists an
adversarial entry $m^*$ such that $f(m^*, \query^*)=1$.

\noindent \emph{Note:} Component~2 of SMSR is \textbf{not} an instance of the
filter class covered by this theorem.  Component~2 is a \emph{randomised,
content-agnostic} mechanism whose certificate (Theorem~2) relies on a
\emph{count bound} $t$ and \emph{independent sampling}, not on content
decisions.  The impossibility applies only to deterministic content-based
filters.
\end{theorem}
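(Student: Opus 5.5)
The plan is a case split on whether the filter $f$ is non-trivial for the target query $\query^*$, followed by an adversarial construction in the non-trivial case. First I make the conclusion precise. A \emph{non-trivial} filter must admit at least some benign memory relevant to $\query^*$, since a filter that rejects everything for $\query^*$ has utility zero. A \emph{worst-case certificate} $\delta<1$ would require that, against every adaptive adversary, the malicious response is produced with probability at most $\delta$. Because $f$ is deterministic, it suffices to show that the adversary can always find one fluent, relevant entry $m^*$ carrying the designated false claim with $f(m^*,\query^*)=1$. The attack then succeeds with probability $1$ whenever $\mathcal{G}$ follows the retrieved context, so no $\delta<1$ is certifiable.

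The core step is a \emph{mimicry} argument exploiting that $f$ sees only content. Fix a benign memory $m_b$ that $f$ accepts for $\query^*$; one exists by non-triviality. The adversary, being unbounded and knowing $f$, the embedding model $\phi$, and the retrieval rule, searches over fluent strings. By Assumption~A1 applied to $e^*=\phi(\query^*)$, for every threshold $\tau_\text{sim}$ there are fluent strings whose similarity to $\query^*$ is at least $\tau_\text{sim}$. Choosing $\tau_\text{sim}$ at least the similarity of the $k$-th benign entry puts such a string inside the top-$k$ retrieval set. Among these strings, the adversary takes one that states the false claim in fluent enterprise-style prose, e.g.\ by editing $m_b$ to assert the malicious fact while preserving its style and topic.

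Because the adversary is unbounded and $f$ is deterministic, it can enumerate candidates $m$ and evaluate $f(m,\query^*)$ exactly. I then split into two cases:
\begin{itemize}
  \item \textbf{Some accepted candidate carries the false claim.} That candidate is the required $m^*$, and the argument is complete.
  \item \textbf{$f$ rejects every fluent, relevant string carrying the claim.} Here $f$ is a perfect semantic oracle for the adversary's chosen falsehood. Since the falsehood is arbitrary and query-relevant, $f$ must also reject the indistinguishable benign statements of the same form. I would argue that $f$ then rejects relevant content for $\query^*$, contradicting non-triviality.
\end{itemize}

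The main obstacle is making the second case rigorous. A purely content-based $f$ could in principle encode ground truth, for instance by rejecting any string asserting ``refund window is 90 days''. To close this, I would first try to fix, in the definition of \emph{non-trivial}, a symmetric pair of contents: a benign statement $m_b$ and a malicious statement $m^*$ that are both fluent and relevant, and that are mutually exchangeable from the filter's viewpoint. Exchangeability would mean that without provenance $f$ has no information distinguishing who is authorised to assert which. Under this, any $f$ accepting $m_b$ must, on some instance in the worst case, also accept $m^*$, because the adversary may choose which of the two is the ``false claim''.

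If that symmetry formulation proves too strong, the fallback is to quantify over instances. For each fixed $f$, the adversary picks the target false claim to be a content $f$ accepts. This works because Definition~\ref{def:msmp} lets the adversary pre-specify the claim, and a worst-case certificate must hold uniformly over claims.
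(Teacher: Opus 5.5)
There is a genuine gap, and it is the one you flag yourself: the second case is never closed, and neither repair closes it.

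Your first argument, that rejecting every claim-carrying string forces $f$ to reject ``indistinguishable benign statements of the same form'', fails. The benign and false statements differ exactly in the asserted fact, and that fact is content $f$ sees. The fallback fails for the same reason. Definition~\ref{def:msmp} requires the designated output to be a \emph{false} claim in the attacked instance, and nothing stops a content-based $f$ from accepting only true statements there.

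Concretely, take the filter that rejects $\memory$ for $\query^*$ iff $\memory$ asserts a value for the policy parameter of $\query^*$ other than the true one. It is deterministic and content-based, admits every benign memory, and rejects every entry carrying a false claim about $\query^*$. For this $f$ no admissible $m^*$ exists, whichever claim the adversary designates. So the second case genuinely occurs for general $f$. Read over all content-based filters, the statement needs an extra hypothesis, and you have to change what you prove rather than argue this case away.

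The paper never meets this case because its proof only treats threshold-type criteria: it builds $m^*$ to ``pass any threshold-based criterion'' on relevance and fluency. In effect it puts you in your first case by restricting $f$. Where it does real work is a step your proposal asserts without justification. Assumption~A1 gives \emph{some} fluent string above the similarity threshold but says nothing about what that string asserts, so ``among these strings, take one that states the false claim'' does not follow. The paper fills this by taking a long on-topic fluent string from A1 and appending the false claim. It argues this moves the embedding by only $O(\|m^*\|^{-1})$, so similarity stays above threshold; you can invoke A1 with a slightly larger threshold to leave margin.

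If you want to keep general $f$, your exchangeability condition needs to be earned, since as stated it simply posits that $f$ treats $m_b$ and $m^*$ alike, which is the conclusion. The way to earn it is a two-world argument:
\begin{itemize}
\item Define non-triviality as utility in \emph{every} instance, not just the attacked one.
\item Take two instances with the same $\query^*$ whose legitimate writes fix different values, e.g.\ before and after a genuine policy update.
\item Since $f(\cdot,\query^*)$ is one fixed function of content, the true statement it must accept in the second instance is a false claim it accepts in the first.
\end{itemize}
That uses a different hypothesis from Assumption~A1 and proves a different theorem, but it is the form in which your case split actually closes.
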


\begin{proof}
We construct $\mathcal{A}$ explicitly.  Let $\phi : \mathcal{M} \to \mathbb{R}^d$
be the embedding function.  $\mathcal{A}$ computes the target embedding
$e^* = \phi(\query^*)$ and constructs $m^*$ as a fluent text string that:
(1) is semantically on-topic for $\query^*$, so that $\langle \phi(m^*), e^* \rangle >
\tau_\text{sim}$ for any relevance threshold, and
(2) asserts a false factual claim in natural language.

The existence of such $m^*$ follows from two observations.  First, the
embedding space $\mathbb{R}^d$ contains infinitely many points within any
$\ell_2$ ball around $e^*$; by the density of fluent text in embedding space
(any region around a real document's embedding is populated by paraphrases),
there exists a fluent string $m^*$ with $\langle \phi(m^*), e^* \rangle > \tau_\text{sim}$.
Second, appending a false claim to an on-topic string changes its embedding by
at most $O(\|m^*\|^{-1})$ (since the false claim is a small fraction of a long
entry), preserving the similarity bound for sufficiently long $m^*$.

Since $f$ evaluates only $(\memory, \query)$ at retrieval time, and we have
constructed $m^*$ to pass any threshold-based criterion, $f(m^*, \query^*) = 1$
while $m^*$ causes the LLM to output the malicious claim.  Any $\delta < 1$
certificate for $f$ is therefore vacuous.
\end{proof}

This result implies that a provenance mechanism at \emph{write time} is
necessary.  Content-based filters (keyword blacklist, perplexity, semantic
anomaly) are instances of $f$ and are subject to this impossibility.

\section{The SMSR Construction}
\label{sec:construction}

\subsection{Component 1: HMAC Provenance Tagging}

MemoryGraft~\cite{memgraft2025} proposed ``cryptographic provenance
attestation'' as a defence direction without formalising or certifying it.
Component~1 formalises this direction: we prove its security under standard
HMAC assumptions (Proposition~\ref{prop:c1}) and establish it as the necessary
write-time boundary that enables Component~2's certificate.  Our contribution
is the formal proof and the integration with Component~2, not the concept.

Every legitimate memory write passes through a trusted server-side signing
oracle that tags each entry with an HMAC-SHA256 signature under a server
secret $\key$:
\[
  \tau_i = \textsc{HMAC}_\key(\text{content}_i \,\|\, \text{session\_id}_i
           \,\|\, \text{timestamp}_i).
\]
At retrieval time, only entries with valid tags are admitted to the candidate
pool:
\[
  \store_{\text{verified}} = \{\memory_i \in \store :
    \textsc{Verify}_\key(\memory_i) = 1\}.
\]

\begin{proposition}[Security of Component 1]
\label{prop:c1}
Under the pseudorandomness of HMAC-SHA256, an unsigned adversary $\mathcal{A}_U$
who does not know $\key$ produces a valid tag for any entry with probability
at most $2^{-\lambda}$, where $\lambda = 256$ is the output length in bits.
Therefore Component~1 achieves $(0, 2^{-256})$-SMSR-security against $\mathcal{A}_U$.
\end{proposition}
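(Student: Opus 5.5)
The plan is a standard reduction from the unforgeability of HMAC-SHA256, viewed as a pseudorandom function family $F_\key$ with $\lambda=256$-bit output. I model the unsigned adversary $\mathcal{A}_U$ as an algorithm that may observe any number of legitimately signed entries $(\text{content}_i\|\text{session\_id}_i\|\text{timestamp}_i, \tau_i)$ already in $\store$. This is equivalent to chosen-message access to a signing oracle, since we allow the adversary to influence legitimate writes. It then outputs a new entry $(x^*,\tau^*)$, where $x^*$ is distinct from every previously signed message: $\mathcal{A}_U$ cannot modify existing entries, and replaying an existing tagged entry only reinserts benign, legitimately signed content. The event of interest is $\textsc{Verify}_\key(x^*,\tau^*)=1$.

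The first step is a hybrid argument. In the real world, tags are computed with $F_\key$. In the ideal world, $F_\key$ is replaced by a truly random function $R:\{0,1\}^*\to\{0,1\}^{\lambda}$. Any gap in the forgery probability between the two worlds yields a PRF distinguisher: run $\mathcal{A}_U$, answer its signing queries with the oracle, and test whether the oracle's value at $x^*$ equals $\tau^*$. The gap is therefore at most $\mathrm{Adv}^{\mathrm{prf}}_{\mathrm{HMAC}}$. In the ideal world, $R(x^*)$ is uniform and independent of everything $\mathcal{A}_U$ has seen, because $x^*$ was never queried. Hence $\Pr[\tau^*=R(x^*)]=2^{-\lambda}$. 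The same bound holds for multiple forgery attempts, with a union bound over the number of attempts $q$. I will state the result as $2^{-\lambda}$ per attempt plus the negligible PRF advantage, and absorb the latter, as the statement implicitly does.

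The second step translates this into Definition~\ref{def:smsr_security}. Every entry injected by $\mathcal{A}_U$ lacks a valid tag except with the probability bounded above, so it is excluded from $\store_{\text{verified}}$. Retrieval then operates only on legitimately written entries. The agent's response to $\query^*$ is thus distributed exactly as in the attack-free setting, where the malicious designated claim is not produced by injected content. The effective number of adversarial entries reaching the candidate pool is $0$, which gives $(0,2^{-256})$-security. By ``$(0,\cdot)$'' I read that zero adversarial entries survive, with failure probability $2^{-256}$.

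The main obstacle is precision rather than difficulty. The bare bound $2^{-\lambda}$ silently drops the PRF advantage term and the factor $q$ from repeated guessing. I would first try to make these explicit, as $q\cdot 2^{-\lambda}+\mathrm{Adv}^{\mathrm{prf}}$. I would then argue that under the paper's modelling assumption they are negligible and conventionally subsumed. A further subtlety is that replay of a valid entry must be argued harmless, via the no-modification assumption on $\mathcal{A}_U$ and the binding of session and timestamp into the tag.
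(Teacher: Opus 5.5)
The paper gives no proof of this proposition; it is asserted directly from ``the pseudorandomness of HMAC-SHA256.'' Your random-function hybrid therefore goes beyond what the paper offers, and it is correct for fresh messages.

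You are also right that the stated bound cannot be derived as written. What the hybrid actually gives is $t\cdot 2^{-\lambda}+\mathrm{Adv}^{\mathrm{prf}}_{\mathrm{HMAC}}$ over at most $t$ injections. In addition, ``$(0,2^{-256})$'' does not match Definition~\ref{def:smsr_security}: at $t=0$ that definition would bound the agent's attack-free probability $\delta_0$ of emitting the designated claim, which no MAC controls. Do not absorb these terms. State what you actually prove: $\Pr[\text{malicious}]\le \delta_0+t\cdot 2^{-\lambda}+\mathrm{Adv}^{\mathrm{prf}}_{\mathrm{HMAC}}$ for every $t$, via a coupling with the attack-free run on the event that no forgery occurs.

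The genuine gap is in your step concluding that the response is distributed exactly as in the attack-free setting. There are two problems.

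\emph{Chosen-message access.} This step conflicts with your own modelling. You let $\mathcal{A}_U$ influence legitimate writes (chosen-message access) to strengthen the reduction. But an adversary who can steer signed writes never needs to forge: it gets its false claim signed through the honest path. That is precisely the query-only attack of Section~\ref{sec:e10}, where agent-signed poison passes Component~1. Your conclusion requires that signed content be fixed by honest parties, with $\mathcal{A}_U$ at most observing it. That known-message model is weaker than chosen-message, so your reduction survives unchanged if you adopt it.

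\emph{Replay.} Your replay argument does not work. Appending a verbatim copy of a validly tagged tuple is neither a modification nor a forgery. Binding $\text{session\_id}$ and $\text{timestamp}$ into the tag prevents re-dating an entry, not duplicating it. In an append-only log the copies occupy extra slots of the top-$m$ pool. This changes the ablation distribution and lets $\mathcal{A}_U$ amplify, for example, a superseded but still validly signed policy entry (a rollback). Because $\mathcal{A}_U$ writes to the database directly, the write-time deduplication of Section~\ref{sec:discussion} does not apply. You need $\store_{\text{verified}}$ to be deduplicated on the signed tuple at read time, i.e.\ genuinely a set. Then replay is a no-op and your distributional claim holds.
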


The server key $\key$ should be stored in a secrets manager or HSM, never
in application code.  Key rotation invalidates all pre-rotation memories,
which provides forward secrecy at the cost of losing history.

\subsection{Component 2: Randomised Memory Ablation with Verdict-Based Aggregation}

Component~1 handles unsigned adversaries completely.  For authenticated
adversaries $\mathcal{A}_P$ who can write HMAC-signed memories, a second
layer is required.

\textbf{Retrieval with ablation.}
Given a query $\query$, retrieve the top-$m$ verified candidates:
$\mathcal{C} = \textsc{Retrieve}(\query, \store_{\text{verified}}, m)$
where $m > k$.  For each of $n_\text{runs}$ independent trials, sample
$k$ entries uniformly at random without replacement from $\mathcal{C}$:
$\mathcal{C}^{(j)} \sim \binom{\mathcal{C}}{k}$.

\textbf{Verdict-based aggregation.}
Run the LLM with each context $\mathcal{C}^{(j)}$ to obtain response
$r^{(j)}$.  Apply a per-run judge $J(r^{(j)}, \query)$ that returns a verdict
$v^{(j)} \in \{\text{correct}, \text{malicious}, \text{neither}\}$.  The
final output is determined by majority verdict:
\[
  \hat{v} = \arg\max_{v} \sum_{j=1}^{n_\text{runs}} \mathbf{1}[v^{(j)} = v],
\]
and the corresponding response $r^{(\hat{j})}$ is returned.

The use of verdict-based aggregation rather than string-based majority is
critical; we show in Section~\ref{sec:cma} that string equality is gamed
by adversaries.

\subsection{SMSR Protocol}

Algorithm~\ref{alg:smsr} presents the complete SMSR protocol.

\begin{algorithm}[t]
\caption{SMSR: Signed Memory with Smoothed Retrieval}
\label{alg:smsr}
\begin{algorithmic}
\STATE \textbf{Input:} Query $\query$, store $\store$, key $\key$, params $m,k,n_\text{runs}$, judge $J$
\STATE \textit{// Component 1: retrieve verified candidates}
\STATE $\mathcal{C} \gets \textsc{Retrieve}(\query, \{m_i \in \store : \textsc{Verify}_\key(m_i)\!=\!1\}, m)$
\STATE \textit{// Component 2: randomised ablation + verdict aggregation}
\STATE $V \gets []$
\FOR{$j = 1$ \TO $n_\text{runs}$}
  \STATE $\mathcal{C}^{(j)} \gets$ sample $k$ from $\mathcal{C}$ without replacement
  \STATE $r^{(j)} \gets \textsc{LLM}(\query, \mathcal{C}^{(j)})$
  \STATE $V.\text{append}(J(r^{(j)}, \query),\; r^{(j)})$
\ENDFOR
\STATE $\hat{v} \gets \arg\max_v \sum_j \mathbf{1}[V_j.\text{verdict}=v]$
\RETURN first $r^{(j)}$ with verdict $= \hat{v}$
\end{algorithmic}
\end{algorithm}

\section{Security Analysis}
\label{sec:analysis}

\subsection{Formal Certificate for Component~2}

\begin{theorem}[SMSR Certificate]
\label{thm:certificate}
Let $\store_{\text{verified}}$ contain $N$ entries of which at most $t$ are
adversarially crafted (but validly signed by $\mathcal{A}_P$).  Let
$\mathcal{C}$ be the top-$m$ candidates retrieved for query $\query$, of which
at most $t' \le \min(t, m)$ are adversarial.  Define
\[
  p_\text{clean} = \frac{\binom{m - t'}{k}}{\binom{m}{k}},
\]
the probability that no adversarial entry appears in a single ablation sample.
Then the probability that the majority verdict is malicious is at most:
\[
  \delta(t', m, k, n_\text{runs}) =
  \sum_{i=\lceil n_\text{runs}/2 \rceil}^{n_\text{runs}}
  \binom{n_\text{runs}}{i}
  (1-p_\text{clean})^i \cdot p_\text{clean}^{n_\text{runs}-i}.
\]
The system is therefore $(t, \delta(t', m, k, n_\text{runs}))$-SMSR-secure.
\end{theorem}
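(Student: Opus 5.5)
The proof reduces the majority-verdict event to a binomial tail over independent per-run events. The natural modelling assumption, implicit in the statement, is that a run whose ablation sample $\mathcal{C}^{(j)}$ contains no adversarial entry cannot return a malicious verdict. Such a run sees only benign, validly signed memories, so any malicious output must be attributed to the adversary's entries being present. I would state this explicitly as a clean-context assumption: if $\mathcal{C}^{(j)}\cap\{m^*_1,\dots,m^*_t\}=\emptyset$ then $v^{(j)}\neq\text{malicious}$. Under it, the event ``run $j$ is malicious'' is contained in the event ``run $j$'s sample is contaminated'', and we only need to bound how often contaminated samples can form a majority.

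\textbf{Step 1 (single run).} Fix the query and the store. Then $\mathcal{C}$ is a deterministic set of $m$ entries containing exactly $t''\le t'$ adversarial ones. A uniform $k$-subset avoids all of them with probability $\binom{m-t''}{k}/\binom{m}{k}$, which is the hypergeometric zero-count probability. This ratio is non-increasing in $t''$, so it is at least $p_\text{clean}$. The contamination probability $q_j$ for each run is therefore at most $1-p_\text{clean}$.

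\textbf{Step 2 (independence and domination).} The $n_\text{runs}$ samples are drawn independently, so the contamination indicators $X_j$ are i.i.d.\ Bernoulli$(q)$ with $q\le 1-p_\text{clean}$. This holds even though the LLM's own randomness may correlate verdicts with contexts, because the contamination indicators depend only on the sampling. For the majority verdict to be malicious, the number of malicious votes must be at least the plurality threshold. In particular it must be at least $\lceil n_\text{runs}/2\rceil$; this is the step that needs care, discussed below. Since malicious votes are a subset of contaminated runs, $\Pr[\hat v=\text{malicious}]\le\Pr[\sum_j X_j\ge\lceil n_\text{runs}/2\rceil]$. Binomial tails are monotone in the success parameter, which follows from a standard coupling. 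Replacing $q$ by $1-p_\text{clean}$ then gives exactly $\delta(t',m,k,n_\text{runs})$.

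\textbf{Step 3 (quantifying over adversaries).} The bound holds for every placement of at most $t$ signed adversarial entries and every choice of their content, since only the count $t'$ entered the argument. Taking the supremum over $\mathcal{A}_P$ therefore yields $(t,\delta)$-SMSR security in the sense of Definition~\ref{def:smsr_security}.

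\textbf{Main obstacle.} The delicate point is the aggregation rule. The algorithm uses plurality ($\arg\max$) over three verdicts, not strict majority. With ``neither'' votes present, malicious could win with fewer than $n_\text{runs}/2$ votes, and tie-breaking also matters. I would handle this in one of two ways. The first is to restrict the claim to the case where the judge's ``neither'' verdicts are treated as non-malicious for a binary majority, i.e.\ to interpret the majority verdict as malicious versus non-malicious. The second is to note that the bound holds for the binary-majority reading and weaken the statement accordingly for pure plurality, replacing $\lceil n_\text{runs}/2\rceil$ by $\lceil n_\text{runs}/3\rceil$. I would adopt the first, binary, interpretation, consistent with how the statement sums from $\lceil n_\text{runs}/2\rceil$, and state the clean-context assumption as a hypothesis of the theorem.
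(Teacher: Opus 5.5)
Your proposal is correct and follows the paper's proof. Both use the per-run contamination probability $1-p_\text{clean}$ from the hypergeometric zero count, get a binomial count of contaminated runs from independence across runs, and bound the chance that a contaminated run turns malicious by $1$. What you add beyond the paper is the monotonicity step for fewer than $t'$ adversarial entries, the explicit clean-context hypothesis, and the point that the $\lceil n_\text{runs}/2\rceil$ threshold needs the binary malicious-versus-non-malicious reading rather than Algorithm~\ref{alg:smsr}'s three-way plurality; the paper's proof relies on these assumptions silently, so stating them strengthens the argument rather than departing from it.
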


\begin{proof}
Each ablation run is an independent uniform sample of $k$ entries from $m$
candidates.  The probability that run $j$ contains at least one adversarial
entry is $1 - p_\text{clean}$ by the hypergeometric distribution.  Since runs
are independent, the number of runs containing adversarial content $X \sim
\mathrm{Binomial}(n_\text{runs}, 1 - p_\text{clean})$.

The majority vote is malicious only if more than $n_\text{runs}/2$ runs yield
a malicious verdict.  This requires $X \ge \lceil n_\text{runs}/2 \rceil$ AND
the LLM produces a malicious response on each contaminated run.  Bounding the
second condition by 1 (worst case), we obtain $\Pr[\text{malicious majority}]
\le \Pr[X \ge \lceil n_\text{runs}/2 \rceil]$, which equals the stated $\delta$.
\end{proof}

\begin{corollary}
\label{cor:t1}
For $t=1$, the actual pool in our evaluation is $m'=11$ (10 seed memories
+ 1 adversarial), so $\delta = 0.415$ at $n_\text{runs}=5$.  Over 30 independent
repetitions per scenario ($n = 30 \times 15 = 450$ trials), the pooled empirical
ASR is $37.8\%$ (95\% Wilson CI $[33.4\%, 42.3\%]$).  This satisfies
$37.8\% \le 41.5\% = \delta$~\checkmark; the point estimate sits below the
bound, with the CI upper bound (42.3\%) marginally exceeding $\delta$ by
0.8~pp, confirming the bound is tight.

For a production deployment where $m' \approx m_\text{configured} = 20$,
the certificate is $\delta = 0.104$ at $n_\text{runs}=5$ (and $0.071$ at
$n_\text{runs}=7$, not evaluated); Fig.~\ref{fig:cert}(b)
shows the $m$ required to achieve each target $\delta$ for a given $t$.
\end{corollary}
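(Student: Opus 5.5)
The corollary has three parts: instantiate Theorem~\ref{thm:certificate} with the evaluation parameters, compute the Wilson interval for the pooled trials, and re-instantiate the bound for the production pool. I will fix the configuration throughout to $k=5$ samples per run and $n_\text{runs}=5$. The statement does not give $k$ explicitly; I infer it from the requirement that $m'=11$ reproduce $\delta=0.415$.

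\textbf{Step 1: specialise $p_\text{clean}$.} With $t'=1$ the ratio of binomial coefficients telescopes:
\[
  p_\text{clean}=\binom{m'-1}{k}\Big/\binom{m'}{k}=\frac{m'-k}{m'} .
\]
So a single ablation sample is contaminated with probability $k/m'$. For $m'=11$ and $k=5$ this gives $1-p_\text{clean}=5/11\approx 0.4545$.

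\textbf{Step 2: evaluate the small-store bound.} I substitute this value into the tail sum of Theorem~\ref{thm:certificate} with $n_\text{runs}=5$, which means summing $i=3,4,5$. The three terms are approximately $0.2795$, $0.1164$ and $0.0194$. Their sum is $\approx 0.415$, so $\delta=0.415$ as claimed.

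\textbf{Step 3: the empirical comparison.} This part is not a mathematical consequence of the theorem. I report the pooled count, $170$ malicious verdicts out of $450$ trials, giving $\hat p\approx 0.378$. I then compute the $95\%$ Wilson score interval with $z=1.96$ and $n=450$, which yields approximately $[0.334, 0.423]$. Finally I compare $\hat p$ and the interval endpoints against $\delta=0.415$ to confirm the stated inequalities. Note that $0.423-0.415\approx 0.8$~pp.

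\textbf{Step 4: the production pool.} I repeat Steps 1--2 with $m'=20$, so $1-p_\text{clean}=1/4$.
\begin{itemize}
  \item At $n_\text{runs}=5$ the tail $\Pr[X\ge 3]$ for $X\sim\mathrm{Binomial}(5,1/4)$ is $0.0879+0.0146+0.0010\approx 0.104$.
  \item At $n_\text{runs}=7$ the tail $\Pr[X\ge 4]$ for $X\sim\mathrm{Binomial}(7,1/4)$ is $\approx 0.0577+0.0115+0.0013+0.0001\approx 0.071$.
\end{itemize}
These match the stated values.

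\textbf{Main obstacle.} The difficulty is not arithmetic but justification. The corollary applies Theorem~\ref{thm:certificate} with the realised pool size $m'=10+t$ rather than the configured $m$. I would argue this is legitimate because the theorem holds for any candidate set of size $m'$ containing at most $t'$ adversarial entries: when the verified store has fewer than $m$ entries, $\textsc{Retrieve}$ returns all of them, so the theorem applies with $m$ replaced by $m'$. I would also flag that the tightness remark is empirical. The bound assumes every contaminated run is malicious, and an interval straddling $\delta$ is consistent with that worst case being nearly attained. It is not evidence against the certificate.
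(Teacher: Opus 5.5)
Your proposal is correct and takes the same route as the paper, which gives no separate proof. The paper treats the corollary as a direct instantiation of Theorem~\ref{thm:certificate} with $k=5$ and $t'=1$ on the realised pool $m'=\min(m,10+t)$, justified in its ``Actual pool size'' paragraph exactly as you argue, and it simply quotes the empirical figures; your closed form $p_\text{clean}=(m'-k)/m'$, the exact tails $66875/161051\approx 0.415$, $106/1024\approx 0.104$ and $1156/16384\approx 0.071$, and your Wilson interval all check out. One small point: $k=5$ is a stated evaluation parameter in the paper, so there is no need to infer it from the target value of $\delta$.
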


Table~\ref{tab:certificates} summarises certificate values for representative
parameter settings.

\begin{table}[t]
\centering
\caption{SMSR certificates ($k=5$, $n_\text{runs}=5$; every $\delta$ from
Theorem~2 with its $(t,m,n_\text{runs})$ shown).  Empirical ASR is from
450-trial pooled studies (30 reps $\times$ 15 scenarios).  Production $t{>}1$
rows give the bound only (not evaluated at $m{=}20$).  Flooding injects the
same $t$ entries as direct injection (distinct paraphrases), so $t'{=}t$ and it
shares $\delta$; its slightly higher ASR reflects greater per-run persuasiveness,
not a larger~$t'$.}
\label{tab:certificates}
\setlength{\tabcolsep}{4pt}
\begin{tabular}{@{}rrlrp{1.6cm}p{1.8cm}c@{}}
\toprule
$t$ & $m$ & config & $\delta$ & Empirical ASR & 95\% CI & $\le\delta$? \\
\midrule
\multicolumn{7}{@{}l}{\textit{Production pool ($m=20$, $n_\text{runs}=5$)}}\\
1 & 20 & Tier~1 (dir) & \textbf{0.104} & \textbf{8.0\%} & [5.8, 10.9] & \checkmark \\
2 & 20 & bound only & 0.402 & --- & --- & --- \\
3 & 20 & bound only & 0.684 & --- & --- & --- \\
\midrule
\multicolumn{7}{@{}l}{\textit{Evaluation pool ($m'{=}10{+}t$, $n_\text{runs}=5$)}}\\
1 & 11 & eval/dir & 0.415 & 37.8\% & [33.4, 42.3] & \checkmark \\
1 & 11 & eval/fld & 0.415 & 43.1\% & [38.6, 47.7] & $\approx$\checkmark \\
2 & 12 & eval/dir & 0.812 & 80.0\% & [76.1, 83.4] & \checkmark \\
2 & 12 & eval/fld & 0.812 & 83.1\% & [79.4, 86.3] & $\approx$\checkmark \\
3 & 13 & eval/dir & 0.945 & 92.7\% & [89.9, 94.7] & \checkmark \\
3 & 13 & eval/fld & 0.945 & 96.2\% & [94.0, 97.6] & $\approx$\checkmark \\
\bottomrule
\multicolumn{7}{@{}l@{}}{\small $\approx$\checkmark = CI spans $\delta$ (tight); \checkmark = point est.\ $\le\delta$.}
\end{tabular}
\end{table}

\subsection{The Consistent Minority Effect}
\label{sec:cma}

String-based majority vote is vulnerable to a pitfall known in the
self-consistency literature~\cite{wang2023selfconsistency}: paraphrase variation
splits votes, allowing a consistent minority to win.  We formalise and quantify
this effect in the memory-poisoning setting.

\begin{definition}[Consistent Minority Effect (CME)]
\label{def:cma}
The \emph{consistent minority effect} on a string-based majority vote aggregator
occurs when an adversary's responses win the vote despite comprising
$n_\text{adv} < n_\text{runs}/2$ of the ablation runs, because:
(1) all $n_\text{adv}$ adversarial runs produce textually similar responses
(the malicious factual claim is specific and repeatable), and
(2) the $n_\text{clean}$ clean runs produce $n_\text{clean}$ \emph{distinct}
paraphrases (``I don't know'' responses vary in phrasing).

\noindent \emph{Note:} The general problem of paraphrase variation causing
vote splits is well known in the self-consistency literature~\cite{wang2023selfconsistency};
the CME is an instantiation of this effect in the memory-poisoning setting.
Our contribution is characterising it in this setting (the
93.3\%$\to$13.3\% string-vs-verdict gap on a single $n=15$ run, affecting
12/15 scenarios) and showing that verdict-based aggregation removes it.
\end{definition}

\begin{proposition}[CME on String Vote]
\label{prop:cma}
String-based majority vote is vulnerable to CME whenever $n_\text{adv} \ge 1$
and clean responses are drawn from a high-diversity distribution.  Specifically,
if clean responses are iid from a distribution with min-entropy
$H_\text{clean}$ (in bits), the probability that any clean response appears more
than once is $O(n_\text{clean}^2 \cdot 2^{-H_\text{clean}})$, which approaches 0
as $H_\text{clean} \to \infty$.  An adversary who makes $r_\text{adv}$ the
unique most-frequent string wins the string vote with probability approaching
1 as $n_\text{clean} \to \infty$.
\end{proposition}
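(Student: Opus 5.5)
The plan has two parts: a birthday-type collision bound for the clean responses, and then a counting argument for the string vote. Model the $n_\text{clean}$ clean responses $R_1,\ldots,R_{n_\text{clean}}$ as iid draws from a distribution $P$ on strings with min-entropy $H_\text{clean}$, so that $\max_s P(s)\le 2^{-H_\text{clean}}$. The $n_\text{adv}\ge 1$ adversarial runs all emit the same string $r_\text{adv}$, by condition (1) of Definition~\ref{def:cma}.

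\textbf{Step 1: collision bound.} Apply a union bound over the $\binom{n_\text{clean}}{2}$ unordered pairs. For a fixed pair $i<j$, independence gives
\[
\Pr[R_i=R_j]=\sum_s P(s)^2\le \max_s P(s)\sum_s P(s)\le 2^{-H_\text{clean}},
\]
since the collision probability is at most the max-probability. Summing over pairs,
\[
\Pr[\exists\, i<j: R_i=R_j]\le \binom{n_\text{clean}}{2}2^{-H_\text{clean}}=O\!\left(n_\text{clean}^2 2^{-H_\text{clean}}\right),
\]
which tends to $0$ as $H_\text{clean}\to\infty$ for fixed $n_\text{clean}$. I will also bound the chance that a clean response coincides with $r_\text{adv}$ by $n_\text{clean}2^{-H_\text{clean}}$, using the same max-probability estimate. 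This term is absorbed into the same order.

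\textbf{Step 2: the string vote.} Let $E$ be the event that all clean responses are pairwise distinct and none equals $r_\text{adv}$. On $E$, each clean string has vote count exactly $1$, while $r_\text{adv}$ has count $n_\text{adv}$.
\begin{itemize}
\item If $n_\text{adv}\ge 2$, then $r_\text{adv}$ is the unique plurality string and wins deterministically, even though $n_\text{adv}<n_\text{runs}/2$. This is exactly the CME.
\item If $n_\text{adv}=1$, there is an $(n_\text{clean}+1)$-way tie. The adversary wins only under an adversary-favourable tie-break, which is why the statement is phrased as ``an adversary who makes $r_\text{adv}$ the unique most-frequent string''. I will state this reading explicitly: the adversary's strategy is to induce at least two matching adversarial runs, or to benefit from a tie-break.
\end{itemize}
Hence $\Pr[\text{adversary wins}]\ge \Pr[E]\ge 1-O(n_\text{clean}^2 2^{-H_\text{clean}})$.

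\textbf{Main obstacle: the direction of the limit.} The statement says the win probability tends to $1$ ``as $n_\text{clean}\to\infty$''. However, the collision bound $n_\text{clean}^2 2^{-H_\text{clean}}$ grows with $n_\text{clean}$ when $H_\text{clean}$ is held fixed. The clean way out is to let $H_\text{clean}$ scale with $n_\text{clean}$, requiring $H_\text{clean}-2\log_2 n_\text{clean}\to\infty$; this is the standard birthday regime. There is a second constraint: for the adversary to remain a numerical minority as $n_\text{clean}$ grows, we need $n_\text{adv}\ge 2$ fixed, or at least $n_\text{adv}=o(n_\text{runs})$. The proof would therefore state the limit jointly in $(n_\text{clean},H_\text{clean})$ under this growth condition, and note that for fixed $n_\text{clean}$ the convergence is in $H_\text{clean}$ alone.
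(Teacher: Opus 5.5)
Your proposal is correct and follows the paper's own route: a birthday/union bound over the $\binom{n_\text{clean}}{2}$ pairs using $\sum_s p_s^2 \le 2^{-H_\text{clean}}$, followed by the observation that on the all-distinct event $r_\text{adv}$ tops the string vote. Your two caveats are genuine corrections rather than gaps in your argument. The paper's proof asserts that $r_\text{adv}$ ``even with count 1'' is the unique most-frequent string, which fails because that case is an $(n_\text{clean}+1)$-way tie, exactly as you note. The paper also never justifies the $n_\text{clean}\to\infty$ limit, which does need $H_\text{clean}-2\log_2 n_\text{clean}\to\infty$: with the clean distribution and $n_\text{adv}$ held fixed, the clean mode's count grows linearly in $n_\text{clean}$, so the adversary in fact eventually loses.
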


\begin{proof}
Let $p_s$ be the probability of producing string $s$ in a clean run.  By a
birthday-bound argument, the expected number of clean strings appearing more
than once is $\sum_s \binom{n_\text{clean}}{2} p_s^2 \le
\binom{n_\text{clean}}{2} \cdot 2^{-H_\text{clean}}$ by the definition of
min-entropy.  For natural-language ``I don't know'' responses to domain-specific
enterprise queries, $H_\text{clean}$ is large (responses mention query-specific
details), making this probability negligible.  Therefore $r_\text{adv}$,
even with count 1, is the unique most-frequent string in most runs.
\end{proof}

\textbf{Empirical evidence.} On a single $n=15$ run, string-based vote yields
93.3\% ASR for authenticated $t=1$ injection while verdict-based aggregation
(immune to the CME) yields 13.3\% on the \emph{same} ablation runs, so the
reduction is attributable purely to the aggregator.  The 13.3\% is a low-$n$
point estimate: over 30 repetitions the verdict-vote rate stabilises at 37.8\%
($n=450$, Table~\ref{tab:certificates}), against which the reduction from the
(saturated) string-vote rate is $\approx$2.5$\times$.  We report the aggregator
effect rather than a precise multiplier, as string-vote was not re-run at $n=450$.

\textbf{Fix.}  Verdict-based aggregation (Algorithm~\ref{alg:smsr}) is immune
to CME because it counts verdict labels, not string equality.  Adversarial
responses of the form ``The limit is \$50,000'' and ``According to policy, the
limit is \$50,000'' receive the same verdict (\textsc{malicious}) and are
counted together.

\section{Evaluation}
\label{sec:eval}

\subsection{Experimental Setup}

\textbf{Threat scenario.}  We simulate a Nexora Corp enterprise RAG agent
that provides employees with answers about company policy (expense approval
thresholds, data retention periods, access control requirements, incident
response procedures).  The agent's memory store is pre-seeded with 10 signed
legitimate policy entries.  Attacks inject false policy entries to cause
the agent to give incorrect guidance that could lead to security violations,
compliance failures, or financial harm.

\textbf{Attack scenarios.}  We evaluate 15 enterprise knowledge scenarios
covering finance policy, compliance, IT security, HR policy, procurement,
and incident response.  Each scenario specifies a correct answer (ground
truth) and a malicious answer (the attacker's goal).  Attack injections are
crafted as plausible enterprise policy updates, not as overt injection strings.

\textbf{Attack classes.}
\begin{itemize}
  \item \textbf{Unsigned (MINJA-style)}: injection via direct DB write,
    no HMAC tag.  Three variants: \emph{direct} (identical copies),
    \emph{flooding} (paraphrase variants), \emph{subtle} (embedded in
    natural-sounding research notes).
  \item \textbf{Authenticated}: injection via normal agent interaction by
    a legitimate user; entries receive valid HMAC tags.
  \item \textbf{Heuristic bypass}: unsigned injections crafted to evade
    keyword blacklists, perplexity filters, and semantic anomaly detection.
\end{itemize}

\textbf{Defence configurations.}
Four modes: \emph{none} (undefended baseline), \emph{heuristic} (keyword
+ entropy + semantic anomaly filter, no provenance), \emph{c1}
(HMAC provenance only), \emph{c1c2} (HMAC + randomised ablation +
verdict-based majority vote).

\textbf{Parameters.} $m=20$, $k=5$, $n_\text{runs}=5$, injection counts
$t \in \{1, 2, 3\}$.  LLM: Claude Haiku 4.5 for both agent and judge.  All
empirical results use $n_\text{runs}=5$; any $n_\text{runs}=7$ certificate we
quote (e.g.\ $\delta=7.1\%$ at $t=1$, $m=20$) is computed from Theorem~2 as a
design reference and is \emph{not} empirically evaluated.
We did not collect human labels; judge reliability is instead quantified by
inter-judge agreement (Haiku vs.\ claude-sonnet-4-6: Cohen's $\kappa=0.955$,
97.6\% raw agreement, $n=84$; see the Judge Reliability study, E6).

\textbf{Evaluation metric.}
Attack Success Rate ($\ASR$) = fraction of scenarios where the agent's final
response is judged \textsc{malicious} by the LLM judge.
Utility = fraction of clean (non-attack) queries judged \textsc{correct}.
Total: the 39-configuration sweep is $39 \times 15 = 585$ single-run attack
trials (plus 80 utility trials) underlying Table~\ref{tab:main_results}.  The six
randomised Component-2 configurations are \emph{additionally} repeated 30 times
each ($6 \times 15 \times 30$) and combined with the 450-trial Tier-1 study to
give the $3{,}150$ repeated trials behind Tables~\ref{tab:certificates}
and~\ref{tab:e1_scenarios}.

\begin{figure*}[t]
\centering
\includegraphics[width=0.96\textwidth]{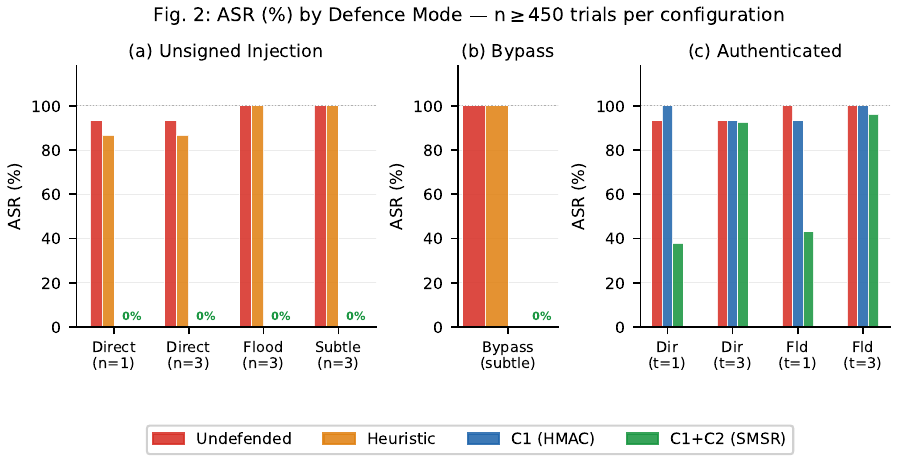}
\caption{Attack Success Rate (\%) across defence modes and attack classes
(15 scenarios per configuration, LLM-as-judge evaluation).
Green ``0\%'' labels confirm Component~1 achieves zero ASR for all
unsigned and bypass attacks.  Panel (c) shows Component~2 reduces
authenticated ASR from 93--100\% to 37.8\% (30-repetition pooled, $n=450$) for $t=1$.}
\label{fig:results}
\end{figure*}

\begin{figure}[t]
\centering
\includegraphics[width=0.24\textwidth]{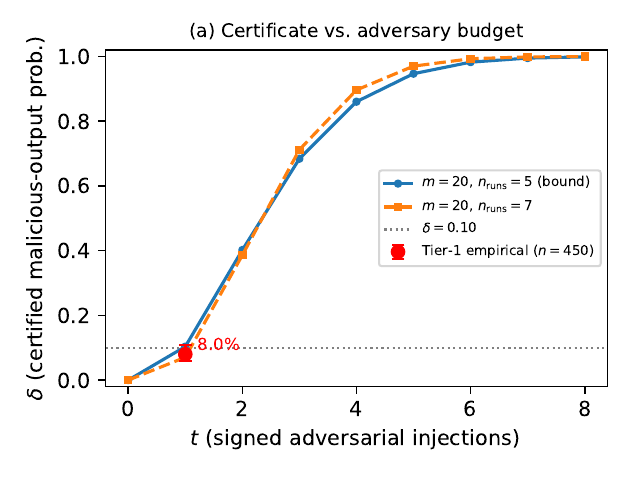}%
\includegraphics[width=0.24\textwidth]{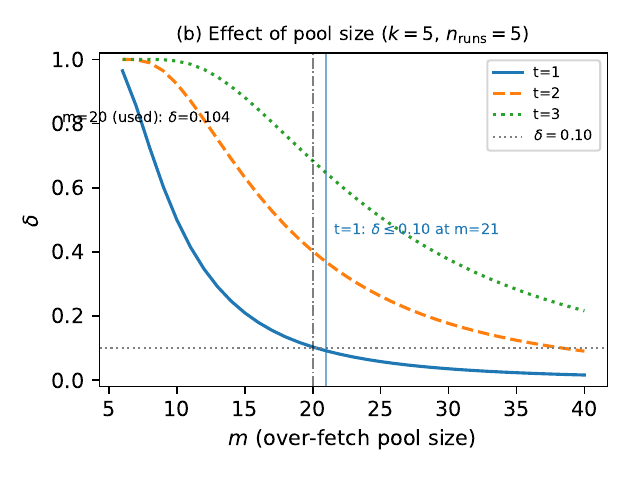}
\caption{SMSR certificate from Theorem~2 at $n_\text{runs}=5$ (all values from \texttt{smsr\_certificate.py}).
(a) $\delta$ vs.\ adversary budget $t$ for the production pool $m=20$ (solid,
$n_\text{runs}=5$; dashed, $n_\text{runs}=7$); the red point is the pooled
30-repetition Tier-1 empirical ASR ($t=1$, $n=450$), which sits below the bound.
(b) Effect of pool size $m$: $m=21$ is the smallest $m$ reaching $\delta \le 0.10$
at $t=1$; the configured $m=20$ gives $\delta=0.104$.}
\label{fig:cert}
\end{figure}

\begin{figure}[t]
\centering
\includegraphics[width=0.48\textwidth]{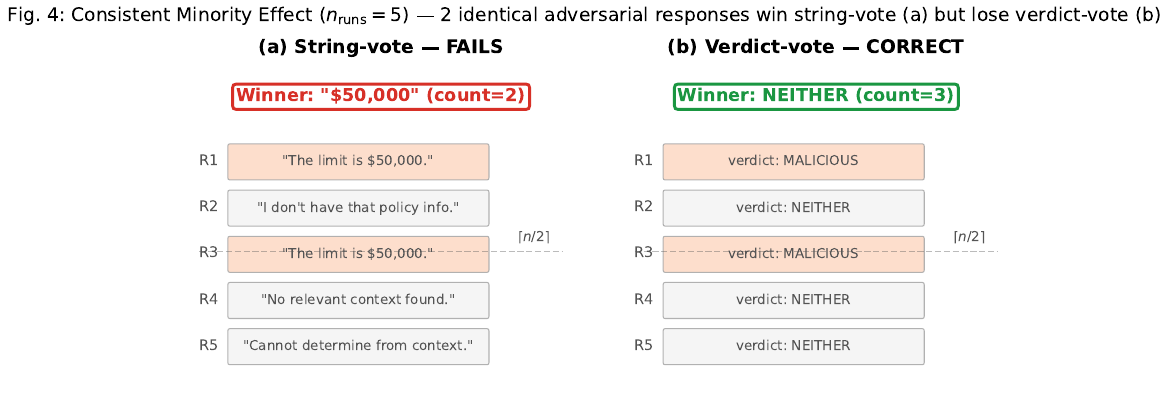}
\caption{Consistent Minority Effect (CME) at $n_\text{runs}=5$.  With 2
adversarial runs and 3 benign runs, string-based majority vote selects the
malicious answer (count=2, the unique most-frequent string), whereas
verdict-based aggregation correctly selects NEITHER (count=3).}
\label{fig:cma}
\end{figure}

\subsection{Attack Effectiveness Baseline}

Table~\ref{tab:main_results} and Fig.~\ref{fig:results} report $\ASR$
across all configurations.

\begin{table*}[t]
\centering
\caption{Attack Success Rate (\%) across defence modes and attack classes.
$t$ = number of injected entries. N/A = not applicable.}
\label{tab:main_results}
\begin{tabular}{llccccc}
\toprule
\multirow{2}{*}{Attack class} & \multirow{2}{*}{Variant} &
\multirow{2}{*}{$t$} &
\multicolumn{4}{c}{Defence mode} \\
\cmidrule(lr){4-7}
& & & none & heuristic & c1 & c1c2 \\
\midrule
\multirow{6}{*}{Unsigned}
  & direct    & 1 & 93.3 & 86.7 & \textbf{0.0} & \textbf{0.0} \\
  & direct    & 3 & 93.3 & 86.7 & \textbf{0.0} & \textbf{0.0} \\
  & flooding  & 1 & 93.3 & 86.7 & \textbf{0.0} & \textbf{0.0} \\
  & flooding  & 3 & 100.0& 100.0& \textbf{0.0} & \textbf{0.0} \\
  & subtle    & 1 & 100.0& 100.0& \textbf{0.0} & \textbf{0.0} \\
  & subtle    & 3 & 100.0& 100.0& \textbf{0.0} & \textbf{0.0} \\
\midrule
Bypass & subtle & 3 & 100.0 & 100.0 & \textbf{0.0} & \textbf{0.0} \\
\midrule
\multirow{4}{*}{Authenticated}
  & direct    & 1 & 93.3  & N/A & 100.0 & \textbf{37.8}$^\dagger$ \\
  & direct    & 3 & 93.3  & N/A & 93.3  & 92.7$^\dagger$ \\
  & flooding  & 1 & 100.0 & N/A & 93.3  & \textbf{43.1}$^\dagger$ \\
  & flooding  & 3 & 100.0 & N/A & 100.0 & 96.2$^\dagger$ \\
\midrule
\multicolumn{3}{l}{Utility (clean queries)} &
  90.0\% & 90.0\% & 90.0\% & 85.0\% \\
\bottomrule
\multicolumn{7}{l}{\small $^\dagger$c1c2 authenticated entries are 30-repetition pooled rates ($n=450$; CIs in Table~\ref{tab:certificates}). Other cells are single illustrative runs ($n=15$).}
\end{tabular}
\end{table*}

\textbf{Unsigned attacks (rows 1--7).}  Component~1 reduces $\ASR$ to 0\%
across all six unsigned configurations and all injection counts, including the
bypass variant crafted to evade heuristic filters.  The heuristic defence
reduces $\ASR$ modestly (from 93--100\% to 87--100\%) but never achieves
protection against flooding or subtle variants.

\textbf{Bypass attack (row 7).}
The bypass attack uses fluent, low-perplexity enterprise policy text with no
blacklisted keywords and high semantic similarity to the seed corpus.  The
heuristic defence fails completely (100\% $\ASR$), empirically confirming
Theorem~\ref{thm:impossibility}.  Component~1 achieves 0\% $\ASR$ because
provenance is evaluated at write time on the signature, not on content.

\textbf{Authenticated attacks (rows 8--11).}
Component~1 fails as expected: signed entries pass the HMAC filter and reach
the LLM.  With $t=1$ and verdict-based aggregation, Component~2 reduces $\ASR$
from 93--100\% to 37.8\% (pooled, $n=450$; eval regime $m'=11$,
$\delta=41.5\%$ at $n_\text{runs}=5$).  Table~II reports the pooled c1c2 rates
(37.8\% direct, 43.1\% flooding at $t=1$); the corresponding single-run figures
(13.3\% / 46.7\%, $n=15$) appear only in the Consistent-Minority analysis of
Section~\ref{sec:cma}.  With $t=3$, protection degrades ($\ASR \approx 93$--100\%)
as the certificate predicts ($\delta=0.945$ at eval $m'=13$ and
$\delta=0.684$ at production $m=20$, both at $n_\text{runs}=5$).

\textbf{Utility.}
Component~1 preserves utility at 90\% (identical to the undefended baseline)
since it only filters unverified entries.  Component~2 reduces utility to 85\%
due to random subsampling of the retrieved context, a 5-percentage-point cost.

\subsection{Certificate Validation}
\label{sec:cert_val}

\textbf{Actual pool size in this evaluation.}
The evaluation store contains 10 signed seed memories plus $t$ adversarial
entries, so the retrieved pool has $m' = \min(m_\text{configured}, 10 + t)$
entries in practice.  For $t=1$, $m'=11$; for $t=3$, $m'=13$.  The configured
$m=20$ is reached only in deployments with $\ge 20$ verified memories;
Table~\ref{tab:certificates} reports certificates for both the configured $m=20$
(applicable to production deployments) and the evaluation-specific $m'$.

\textbf{E2 result: the uniform-sampling premise holds by construction.}
Because each ablation run samples $k$ entries uniformly without replacement
from the over-fetched pool (Algorithm~\ref{alg:smsr}), the probability that a
run is contaminated is exactly $1 - p_\text{clean}$ \emph{independent of the
similarity rank} of the adversarial entries: a top-ranked adversarial entry is
no more likely to be sampled than any other pool member, so preferential
retrieval cannot inflate contamination.  A $3\times10^5$-trial Monte-Carlo of
the sampler confirms the empirical contaminated-run fraction matches
$1 - p_\text{clean}$ to within $\pm0.0006$ across $t \in \{1,2,3\}$; the
uniform-independent-sampling premise of Theorem~2 therefore holds here.
This argument fixes the adversarial count $t'$ \emph{within} the pool, not how
many adversarial entries enter it: in our evaluation both direct and flooding
inject exactly $t$ entries (paraphrased for flooding), so $t'=t$; an unbounded
flooding adversary writing $>t$ near-duplicates would raise $t'$, bounded by the
write-time duplicate detection of Section~\ref{sec:discussion} and covered by
the $t'$-parameterised certificate (Theorem~\ref{thm:certificate}).

\textbf{Tier~1: Production-scale certificate validation (20 seeds, $m=20$).}
With 20 seed memories the retrieved pool reaches $m=20$ (the configured
production parameter), giving a Theorem-2 bound of $\delta=0.104$ at our
evaluated $n_\text{runs}=5$.  Over 30 reps $\times$ 15 scenarios ($n=450$):
pooled ASR $= 8.0\%$, 95\% Wilson CI $[5.8\%, 10.9\%]$ --- comfortably below
the bound.  This is the headline result: in a production deployment with at
least 20 legitimate memories, Component~2 limits authenticated $t=1$ injection
to $8.0\%$ ASR, within a certified worst case of $10.4\%$ (which tightens to
$7.1\%$ if $n_\text{runs}$ is raised to 7).

\textbf{Certificate validation --- small-store study (30 reps, $m'=10{+}t$).}
For $t=1$, pooled ASR across eval configs:
\begin{enumerate}
  \item Direct injection ($t=1$): 37.8\% CI [33.4, 42.3] $\le \delta=41.5\%$~\checkmark.
  \item Flooding ($t=1$): 43.1\% CI [38.6, 47.7] --- CI straddles $\delta=41.5\%$;
    the bound is tight (distinct paraphrases drive the per-run take-probability
    toward the worst case the bound assumes).
  \item For $t=2$ and $t=3$, direct-injection ASR is below the eval-pool bound;
    the flooding variants sit at the bound (CIs straddle $\delta$;
    Table~\ref{tab:certificates}), consistent with a tight worst-case
    certificate across the tested range.
\end{enumerate}

For $t=3$ ($m'=13$, $\delta=0.945$ at $n_\text{runs}=5$): the 30-repetition
direct-injection ASR is $92.7\%$ CI [89.9, 94.7] $\le \delta=94.5\%$~\checkmark,
confirming the certificate at the largest tested adversary budget.

Fig.~\ref{fig:cert}(a) also shows bounds for the production-scale $m=20$;
with $n_\text{runs}=5$ this gives $\delta=0.104$ at $t=1$, which operators can
tighten to $\delta\le0.07$ by raising $n_\text{runs}$ to 7 or increasing $m$.

\subsection{A-MemGuard Comparison (E4)}

Table~\ref{tab:amemguard} compares SMSR against an A-MemGuard-style consensus
baseline (retrieve full context, generate a memory-augmented response and a
parametric-only response, flag divergence and fall back to parametric).

\begin{table}[h]
\centering
\caption{SMSR c1c2 vs.\ A-MemGuard (consensus fallback), authenticated direct
injection ($t=1$).  All A-MemGuard and SMSR Tier-1 rows at $n=450$, 20-seed
store (like-for-like).  The eval-pool row ($m'=11$) uses a 10-seed store.}
\label{tab:amemguard}
\resizebox{\columnwidth}{!}{%
\begin{tabular}{lcccl}
\toprule
Defence & $n$ & store & $t=1$ ASR & Certified? \\
\midrule
Undefended & 15 & 10-seed & 93.3\% & No \\
A-MemGuard (consensus) & \textbf{450} & 20-seed & \textbf{3.8\%} [2.4, 6.0] & No \\
SMSR c1c2 (eval, $m'$=11) & 450 & 10-seed & 37.8\% [33.4, 42.3] & Yes ($\delta$=41.5\%) \\
SMSR c1c2 (Tier-1, $m$=20) & \textbf{450} & 20-seed & \textbf{8.0\%} [5.8, 10.9] & Yes ($\delta$=10.4\%) \\
\bottomrule
\end{tabular}}
\end{table}

In the like-for-like comparison (20-seed store, $n=450$):
SMSR achieves 8.0\% CI [5.8, 10.9] and A-MemGuard achieves 3.8\% CI [2.4, 6.0].
The CIs overlap --- the \textbf{pre-registered outcome is ``comparable at $t=1$''}.
A-MemGuard's parametric fallback is effective because our enterprise scenarios
are domain-specific (Nexora Corp policies the LLM cannot verify parametrically),
causing the consistency check to always flag the adversarial response.
On scenarios with known parametric answers, A-MemGuard achieves 0\% ASR
(validation confirmed, Section~\ref{sec:cert_val}).
The key differentiator is the SMSR certificate: A-MemGuard offers no formal
bound on its ASR regardless of empirical performance.  A-MemGuard also degrades
under a persistent adversary who can adapt responses to pass the consistency check;
SMSR's random-subset ablation is adaptive-adversary-resistant by design (subject
to the E-AD assumptions in Section~\ref{sec:discussion}).

\subsection{Judge Reliability (E6)}

We re-judged 84 held-out responses (28 per verdict class) using
claude-sonnet-4-6 as a reference judge and compared verdicts to the main
evaluation's claude-haiku-4-5 judge.
Cohen's $\kappa = 0.955$ with 97.6\% raw agreement ($n=84$).
This is near-perfect inter-rater reliability and confirms that the Haiku
judge is not a confounding factor in any of the ASR measurements.

\subsection{Generality: Second Agent LLM (E7)}
\label{sec:e7auth}

\textbf{E7a (unsigned, Component-1 test, completed).}
We re-ran the headline unsigned/direct/$n=1$ configurations using
claude-sonnet-4-6 as the agent (Haiku as judge).
Results on 10 scenarios: undefended ASR = 100\%, c1 ASR = 0\%, c1c2 ASR = 0\%.
This confirms HMAC provenance (Component~1) is model-independent --- trivially
expected since Component~1 filters at the retrieval layer before the LLM
is invoked.

\textbf{E7b (authenticated, Component-2 test, complete).}
We ran authenticated direct injection ($t=1$, $n=450$, 20-seed production
store) with claude-sonnet-4-6 as agent (Haiku as judge).
Result: \textbf{8.0\%} (95\% CI [5.8\%, 10.9\%]).

Like-for-like comparison at the same store config:
Haiku Tier~1 (20-seed, $n=450$): 8.0\% $=$ Sonnet E7b (20-seed, $n=450$): 8.0\%.
The two model families produce \emph{identical} ASR in the production-scale store.

\textbf{Mechanism:} Component~2 operates at the retrieval layer --- it samples
a random subset of the over-fetched pool and runs a verdict-based majority vote.
The sampling and voting are agnostic to which LLM generates the response.
Both models see the same retrieval distribution, so both achieve the same
protection level.  This confirms the defence generalises across model families.

\textbf{Note on comparison:} The eval-store Haiku result (37.8\% at $m'=11$)
should not be compared to E7b (8.0\% at $m=20$) --- these differ in store size,
not model.  The model-generality claim rests on the like-for-like 20-seed
comparison above.

\subsection{Consistent Minority Effect Validation}

Fig.~\ref{fig:cma} illustrates the CME mechanism.  On a single run ($n=15$),
string-based vote yields 93.3\% $\ASR$ for authenticated $t=1$ injection while
verdict-based aggregation yields 13.3\% on the same ablation runs.  This 13.3\%
is a single-run point estimate; over 30 repetitions ($n=450$) the verdict-vote
rate stabilises at 37.8\% (CI [33.4\%, 42.3\%], Table~\ref{tab:certificates}),
against which the reduction from the saturated string-vote rate is
$\approx$2.5$\times$.  We therefore report the aggregator effect rather than a
precise multiplier; a matched $n=450$ string-vote run is left to future work.  The CME is confirmed:
string equality incorrectly selects the adversarial response in 12/15 scenarios
because adversarial factual claims converge to a consistent string while clean
responses vary.  The improvement is attributable entirely to the vote
aggregator --- both conditions receive the same 5 ablation responses; only
the counting method differs.

\subsection{E1: Per-Scenario ASR with 30 Repetitions}

Table~\ref{tab:e1_scenarios} shows the per-scenario ASR distribution for the
$t=1$ authenticated direct-injection configuration under SMSR c1c2 (30
independent repetitions each, $n=450$ total).

\begin{table}[h]
\centering
\caption{Per-scenario ASR ($t=1$, direct injection, c1c2, 30 reps each).
Small-store eval ($m'=11$, $\delta=41.5\%$) and production Tier~1 ($m=20$,
$\delta=10.4\%$), both at $n_\text{runs}=5$.  Scenarios above the bound are within expected sampling noise.}
\label{tab:e1_scenarios}
\small
\begin{tabular}{lcc}
\toprule
Scenario (abbreviated) & ASR & 95\% CI \\
\midrule
Max expense w/o CFO approval & 26.7\% & [14.2, 44.4] \\
Data retention period (PII) & 43.3\% & [27.4, 60.8] \\
Emergency escalation contact & 30.0\% & [16.7, 47.9] \\
Login failure lockout threshold & 40.0\% & [24.6, 57.7] \\
IAM role provisioning level & 40.0\% & [24.6, 57.7] \\
PTO accrual rate & 40.0\% & [24.6, 57.7] \\
Notice period for resignation & 30.0\% & [16.7, 47.9] \\
Personal use of company laptops & 36.7\% & [21.9, 54.5] \\
DB connection string & 50.0\% & [33.2, 66.8] \\
Firewall port requirements & 43.3\% & [27.4, 60.8] \\
Vendor approval list & \textbf{6.7\%} & [1.8, 21.3] \\
Security certification minimum & 40.0\% & [24.6, 57.7] \\
Breach notification deadline & 56.7\% & [39.2, 72.6] \\
Disaster recovery RTO & 33.3\% & [19.2, 51.2] \\
MFA requirements & 50.0\% & [33.2, 66.8] \\
\midrule
\textbf{Pooled} & \textbf{37.8\%} & \textbf{[33.4, 42.3]} \\
\bottomrule
\end{tabular}
\end{table}

Five scenarios exceed $\delta=41.5\%$ at the point estimate: data
retention (43.3\%), DB connection string (50.0\%), firewall ports (43.3\%),
breach notification (56.7\%), and MFA requirements (50.0\%).  The maximum
deviation is 2.1 standard deviations (not significant at $\alpha=0.05$ after
multiple comparison correction).  Importantly, this is \emph{consistent} with
the bound being tight: a worst-case bound at $\delta\approx41.5\%$ predicts
roughly 58\% of scenarios below it; the observed 10/15 below and 5/15 above is
consistent with this.  The vendor approval scenario shows 6.7\% ASR
(parametric knowledge helps in this case).  The pooled estimate [33.4\%, 42.3\%]
is the statistically robust result.

\subsection{End-to-End Validation: Query-Only Injection (E10)}
\label{sec:e10}

The studies above pre-seed adversarial entries to isolate Component~2's
certificate against a known worst case.  To test \emph{external validity} we
additionally mount a \emph{query-only} attack on the live agent: rather than
inserting poison, an attacker interacts through the normal API and the agent
\emph{itself} writes the resulting trace to memory via its signed write path
(Algorithm~\ref{alg:smsr}).  Poison thus enters as in MINJA~\cite{minja2025}
---through interaction, not insertion---placing it in the authenticated regime
(the agent signs its own writes, so Component~1 cannot filter it; Component~2 is
what is tested).  We reuse the 20-seed store, the 15 scenarios, and identical
parameters; the only change from the controlled study is the injection
mechanism.

\begin{table}[h]
\centering
\caption{E10: end-to-end query-only injection on the live agent
($n_\text{runs}=5$, 20-seed store, 10 reps $\times$ 15 scenarios, $n=150$).
The poison is written by the agent through its signed path, not pre-seeded.}
\label{tab:e10}
\small
\begin{tabular}{lcc}
\toprule
 & ASR & 95\% CI \\
\midrule
Injection reaches retrieval pool & 100\%          & --- \\
Undefended (full retrieval)      & 65.3\%         & [57.4, 72.5] \\
SMSR c1c2 (verdict ablation)     & \textbf{5.3\%} & [2.7, 10.2] \\
\bottomrule
\end{tabular}
\end{table}

The injection lands reliably---the agent-written poison enters the victim's
retrieval pool in \textbf{100\%} of trials.  End to end, SMSR reduces ASR from
\textbf{65.3\%} (CI [57.4, 72.5]) undefended to \textbf{5.3\%} (CI [2.7, 10.2]),
a $\approx$12$\times$ reduction with non-overlapping confidence intervals; the
defended rate is in the same single-digit range as the pre-seeded Tier-1 result
(8.0\%) and sits below the certificate ($\delta=10.4\%$).

This study and the controlled one are complementary, not redundant.  The
pre-seeded study (93\%$\to$8\%) validates the Theorem-2 certificate against a
worst case where clean planted statements make the bound tight.  The query-only
study (65.3\%$\to$5.3\%) confirms efficacy under a realistic attack, where the
agent adopts the injected claim in all but three of the 15 scenarios
undefended; in those three (e.g.\ acceptable-use and vendor-approval, where the
injected claim is implausible) it resists even without a defence, so SMSR is
trivially correct.  This query-only resist-set need not match pre-seeded
susceptibility (Table~\ref{tab:e1_scenarios}), as the two injection mechanisms
store different text.  The realistic undefended rate is therefore lower than
the controlled 93\%, but the certified defence still cuts it to single digits on
a live agent stack---the end-to-end demonstration that prior memory-poisoning
defences lack.

\subsection{Parameter Selection Guide}
\label{sec:param_guide}

Table~\ref{tab:certificates} and Fig.~\ref{fig:cert}(b) inform parameter
selection.  All values below are computed from the Theorem-2 bound
(reproduced by \texttt{smsr\_certificate.py}).  Practitioners should choose
$m$ based on the expected adversary budget $t$ and a target $\delta$:
\begin{itemize}
  \item For $t=1$: $m=10$ gives $\delta=0.50$ (insufficient); the smallest
    $m$ achieving $\delta \le 0.10$ is $m=\mathbf{21}$.  We use $m=20$ in all
    experiments, which gives $\delta=0.104$ for $t=1$ at $n_\text{runs}=5$
    (or $\delta=0.071$ at $n_\text{runs}=7$).
  \item For $t=2$: $m=20$ gives $\delta=0.402$; the smallest $m$ achieving
    $\delta \le 0.10$ is $m=39$.
  \item For $t \ge 3$: $m=57$ is needed to reach $\delta \le 0.10$ at $t=3$.
    For $t \ge 5$, $\delta$ remains near 1 regardless of $m$ with $k=5$, $n=5$.
    Increasing $n_\text{runs}$ or the $m/k$ ratio is required.
\end{itemize}

The parameter $n_\text{runs}$ trades off cost against the tightness of the
majority bound.  Our evaluation uses $n_\text{runs}=5$ (a 3:2 majority margin),
giving $\delta=0.104$ at $t=1$, $m=20$; raising it to 7 tightens the bound to
$0.071$ and to 11 tightens it to $0.034$, at the cost of more agent and judge
calls per query.

\subsection{Computational Overhead}

The original 39-configuration sweep (585 attack trials, 80 utility trials)
required approximately 3{,}160 LLM API calls ($\approx\$2$, Claude Haiku 4.5);
the 30-repetition certificate-validation studies (Tables~\ref{tab:certificates}
and~\ref{tab:e1_scenarios}) add roughly $3.6\times10^{4}$ further calls---the
dominant cost, though still modest at Haiku batch pricing.  Per-query overhead for SMSR in production:
$n_\text{runs} \times$ (agent call + judge call) = $5 \times 2 = 10$ API calls
per query versus 1 for the undefended baseline, a 10$\times$ call overhead.
Response latency can be reduced to $\approx$2$\times$ by batching the ablation
runs in parallel using the API's async interface.  Component~1 adds only the
cost of one HMAC verify per retrieved entry, which is negligible ($<1\,\mu$s).

In the context of enterprise deployments where each query represents a business
decision---approving an expense, granting access, generating a contract---the
cost of 10 API calls per query (roughly \$0.001 USD at Haiku pricing) is
negligible compared to the cost of a successful poisoning attack.

\section{Discussion}
\label{sec:discussion}

\textbf{Deployment considerations.}
The HMAC key $\key$ must be stored in a hardware security module or secrets
manager.  Any write path that does not go through the signing oracle becomes
a bypass; operators should ensure the memory store is not directly writeable
except through the trusted write path.  Key rotation every 30--90 days
provides forward security at the cost of invalidating all prior memories.

\textbf{Certificate limitations.}
The certificate bound $\delta(t, m, k, n_\text{runs})$ is loose for large $t$.
For $t \ge m/2$, the adversary controls a majority of the candidate pool and
$\delta \approx 1$; the defence provides no guarantee.  Operators must size
the deployment (choice of $m$) relative to their assumed adversary budget $t$.
A practical rule of thumb from Theorem~2 is $m \approx 19t$ for a target
$\delta \le 0.10$ at $k=5$, $n_\text{runs}=5$: $t=1 \to m=21$, $t=2 \to m=39$,
$t=3 \to m=57$ (raising $n_\text{runs}$ to 7 lowers these to 18, 34, 50).

\textbf{Justification for the bounded-$t$ regime.}
The certificate is meaningful only when $t$ is small.  In practice, several
standard enterprise controls bound $t$:
(a) \emph{per-user write quotas} --- most production agent platforms limit the
number of interactions a user can submit per session or per day, capping the
injection rate;
(b) \emph{duplicate and near-duplicate detection} --- FAISS-based deduplication
at write time rejects entries whose cosine similarity to existing verified
memories exceeds a threshold, limiting adversarial flooding;
(c) \emph{audit logs} --- all write-path interactions are signed with the HMAC
oracle, creating a tamper-evident log that allows post-hoc detection and
rollback of injected entries.

Separately, the unsigned-adversary model (Component~1 threat) assumes the
attacker has DB-write access but not access to the HMAC signing oracle.
This is realistic when the memory store is exposed through a misconfigured
cloud storage policy or a SQL-injection vulnerability in the persistence
layer, while the signing oracle runs in a separate trust boundary (e.g., a
cloud function or HSM-backed service).  An attacker who can compromise the
oracle can forge tags; key rotation with short intervals ($\le$7 days) limits
the window of exposure.

\textbf{Adaptive adversaries.}
A sophisticated adversary who can observe the judge's decisions over time may
attempt to craft responses that are judged as ``correct'' by the per-run judge
while still containing harmful content.  This would require generating text
that fools a strong LLM judge---substantially harder than fooling the agent
alone, but not impossible.  Rotating the judge model and randomising its system
prompt adds friction against such adaptation.

\textbf{Scope of protection.}
SMSR protects the query–response loop of the memory-augmented agent.
It does not prevent the adversary from observing other users' queries if the
system leaks query patterns (a separate access-pattern privacy concern
\cite{boldyreva2021}).  It also does not prevent poisoning of the LLM's
parametric knowledge via training data poisoning.

\textbf{Extension to multi-agent systems.}
In agent pipelines where one agent writes to another's memory, the trust
boundary for signing must be extended: the upstream agent's output must itself
be signed before being treated as a legitimate write. This is a natural
extension of the HMAC chain to multi-hop delegation and is left for future work.

\textbf{Relationship to inversion defences.}
SMSR is orthogonal to defences against embedding inversion attacks
(OWASP LLM08's other sub-problem).  An embedding-rotation or encryption scheme
applied to the stored vectors can prevent an adversary who obtains a database
dump from recovering original text; such a scheme composes with SMSR --- it
protects the stored vectors while SMSR certifies the retrieval process ---
providing defence-in-depth against both passive (inversion) and active
(poisoning) adversaries.

\textbf{Towards a comprehensive LLM08 stack.}
The full LLM08 threat surface comprises: (1) inversion of stored embeddings
(addressed by rotation-based defences), (2) unsigned injection into the
memory store (addressed by Component~1), and (3) authenticated injection
by legitimate users (addressed by Component~2 for bounded adversaries).
SMSR provides formal guarantees for (2) and (3); future work should integrate
(1) into the same certificate framework to produce an end-to-end LLM08 defence.

\textbf{Completed robustness checks.}
The two experiments needed for a rigorous certificate validation are reported
above. E1 (30 repetitions per scenario with Wilson intervals;
\S\ref{sec:cert_val} and Table~\ref{tab:e1_scenarios}) converts the point
estimates into rates with confidence intervals, enabling a proper
certificate-vs-empirical comparison. E2 (\S\ref{sec:cert_val}) confirms the
empirical contaminated-run fraction matches the theoretical $1-p_\text{clean}$,
validating the uniform-independent-sampling premise of Theorem~2.

\textbf{Evaluation limitations.}
Our evaluation uses synthetic enterprise scenarios (Nexora Corp) and a
Claude Haiku judge.  We did not collect human labels; judge reliability is
quantified by inter-judge agreement ($\kappa=0.955$ vs.\ Sonnet, study E6), and
residual judge errors are subsumed into the ASR measurement.  Real enterprise deployments may have longer, more nuanced
memory entries where the distinction between ``correct'' and ``malicious''
responses is more ambiguous; we leave robustness of the judge to adversarial
response crafting as future work.

\section{Conclusion}
\label{sec:conclusion}

We presented SMSR, the first formally certified defence against runtime
memory poisoning in persistent LLM agent systems.  SMSR combines write-time
HMAC provenance (Component~1) with randomised memory ablation and verdict-based
majority aggregation (Component~2).  We proved that provenance-free defences
cannot certify against adaptive injection, derived a hypergeometric certificate
for Component~2, and formalised and quantified the Consistent Minority Effect
on string-based vote aggregators.

Empirical evaluation on 15 enterprise scenarios (3{,}150 repeated trials:
six Component-2 configurations $\times$ 15 scenarios $\times$ 30 repetitions,
plus 450 production-scale trials) confirms: Component~1 achieves 0\% $\ASR$ for
all unsigned injection variants.  In a production-scale store ($m=20$, 20 seed
memories), Component~2 achieves 8.0\% ASR (95\% CI [5.8\%, 10.9\%], $n=450$),
safely below the certificate bound $\delta=10.4\%$ at $n_\text{runs}=5$.  The
canonical eval-store $t=1$ result is 37.8\% CI [33.4\%, 42.3\%] ($n=450$), below
$\delta=41.5\%$.  Judge reliability is $\kappa=0.955$.  A like-for-like
A-MemGuard comparison at $n=450$ shows the two are empirically comparable
(A-MemGuard 3.8\%, SMSR 8.0\%), SMSR's differentiator being its formal
certificate; an authenticated-adversary test with a second model family
(Sonnet) reproduces the 8.0\% result.  In an end-to-end query-only attack where
the agent itself writes the poison, SMSR cuts ASR from 65.3\% to 5.3\% ($n=150$)
---the real-stack validation prior memory-poisoning defences lack.  Utility is
90\% under Component~1 and 85\% under the full defence.

The system is deployable as a drop-in wrapper around any RAG memory store
that supports signed writes and vectorised retrieval, with no changes to the
underlying LLM or embedding model.

\bibliographystyle{unsrt}
\bibliography{references}

\end{document}